\documentclass[12pt]{article}

\usepackage[T1]{fontenc}              
\usepackage[width=16cm, height=22cm]{geometry}
\usepackage{amsmath,amssymb,bm,color,mathrsfs,mathtools}
\usepackage[amsmath,thmmarks,hyperref]{ntheorem} 
\usepackage{amscd,verbatim}
\usepackage{authblk}







\newcommand{\N}{{\mathbb N}}
\newcommand{\R}{{\mathbb R}}
\newcommand{\T}{{\mathbb T}}
\newcommand{\Z}{{\mathbb Z}}

\theoremstyle{nonumberplain}  
\theoremheaderfont{\itshape}  
\theorembodyfont{\normalfont}  
\theoremseparator{.}  
\theoremsymbol{$\Box$}
\newtheorem{proof}{Proof} 
  
\theoremstyle{plain}  
\theoremheaderfont{\normalfont\bfseries}  
\theorembodyfont{\itshape}  
\theoremsymbol{~}
\theoremseparator{.}  
\newtheorem{proposition}{Proposition}[section]  

\newtheorem{lemma}[proposition]{Lemma}  
\newtheorem{theorem}[proposition]{Theorem}   

\theorembodyfont{\normalfont}  
\newtheorem{remark}[proposition]{Remark}

\newtheorem{definition}[proposition]{Definition} 

\theoremstyle{nonumberplain}
\theoremheaderfont{\normalfont\bfseries}  
\theorembodyfont{\itshape}  
\theoremsymbol{~}
\theoremseparator{.}  
\newtheorem{theoremnonumber}[proposition]{Main Theorem}  
\newtheorem{mainquestion}[proposition]{Main Question}


\begin{document}

\title{Large-scale geometry obstructs localization}

\author[1]{Matthias Ludewig\thanks{matthias.ludewig@mathematik.uni-regensburg.de}}
\author[2]{Guo Chuan Thiang\thanks{guochuanthiang@bicmr.pku.edu.cn}}
\affil[1]{\normalsize Fakult\"{a}t f\"{u}r Mathematik, Universit\"{a}t Regensburg}
\affil[2]{\normalsize Beijing International Center for Mathematical Research, Peking University}

\date{\today}

\maketitle

\begin{abstract}
We explain the coarse geometric origin of the fact that certain spectral subspaces of topological insulator Hamiltonians are delocalized, in the sense that they cannot admit an orthonormal basis of localized wavefunctions, with respect to any uniformly discrete set of localization centers. This is a robust result requiring neither spatial homogeneity nor symmetries, and applies to Landau levels of disordered quantum Hall systems on general Riemannian manifolds.
  \end{abstract}

\section{Introduction}
Schr\"{o}dinger operators on Euclidean space $\R^d$ with a lattice-periodic potential typically have absolutely continuous spectrum, comprising a sequence of possibly overlapping bands $S_i=[a_i,b_i]$. Although there are no normalizable eigenfunctions, the spectral subspace of an isolated band $S_i$ is an invariant subspace for the lattice $\Z^d$ of translations, and does admit an orthonormal basis $\{\gamma^*w\}_{\gamma\in\Z^d}$ comprising the $\Z^d$-translates of some reference \emph{Wannier wavefunction} $w\in L^2(\R^d)$. Such a basis is called a \emph{periodic Wannier basis} for the spectral subspace, and is a fundamental tool in solid-state physics.

The choice of $w$ is highly non-unique, and usually a \emph{localization} criterion with respect to a localization center $x_0\in\R^d$ is requested of it. Then the $\Z^d$-orbit of $x_0$ is the set of localization centers for the Wannier basis functions. Remarkably, in $d=2$, exponential (and also much milder) localization is obstructed \cite{Brouder} by the first Chern class of the eigenbundle over the Pontrjagin dual $\T^2$ obtained via Bloch--Floquet transform \cite{Kuchment}. Therefore, there exist spectral subspaces of periodic Schr\"{o}dinger operators which are not periodic-Wannier-localizable. Concrete examples include Landau level eigenspaces of magnetic Laplacians \cite{AvSS,BES,Kunz,LT-wannier} and discrete model analogues such as Chern insulators \cite{Haldane}.

In \cite{LT-wannier}, we initiated the study of Wannier bases $\{\gamma^*w\}_{\gamma\in G}$ for (magnetic) Schr\"{o}dinger operators on general Riemannian manifolds $X$, invariant under \emph{non-abelian} discrete groups $G$ of isometries. We found that $G$-periodic-Wannier-localizability is obstructed by an element of the $K$-theory of $C^*_r(G)$, where $C^*_r(G)$ denotes the reduced group $C^*$-algebra of $G$. The special case $C^*_r(\Z^2)\cong C(\T^2)$ reduces to the familiar Chern class obstruction encountered when $X$ is the Euclidean plane $\R^2$.

The $K_0(C^*_r(G))$ obstruction is still deficient in several ways:
\begin{enumerate}
\item[(1)] $G$-invariance is lost when disorder is present, so no $G$-periodic Wannier bases are available, localized or otherwise.
\item[(2)] $X$ may be very inhomogeneous, with no notion of isometric $G$-action at all. For example, a 2D quantum Hall or Chern topological insulator is usually modelled with $X$ a Euclidean plane, but it is robust against non-isometric deformations of the embedding of $X$ within the laboratory Euclidean $\R^3$. Other examples with inhomogeneous metric space $X$ are amorphous topological insulators \cite{amorphous}.
\item[(3)] In quantum Hall systems, the externally applied magnetic field strength is realistically non-uniform, so there are no exact magnetic translational symmetries even if $X$ is assumed to be a simple geometric space like $\R^2$.
\end{enumerate}
These deficiencies prompt us to drop the periodicity condition on Wannier bases entirely, and allow their localization centers to come from  an arbitrary uniformly discrete subset $\Gamma\subset X$. 
We are led to the general notion of a \emph{uniformly localized} Wannier basis, made precise in Definition~\ref{DefinitionUniformlyLocalized} below.
For $X$ a Euclidean space, a similar relaxation has been considered in \cite{MMP,NN}. 

\begin{mainquestion}
 Does a given subspace $\mathcal{H} \subseteq L^2(X)$  admit a uniformly localized Wannier basis, whose localization centers come from some uniformly discrete $\Gamma\subset X$? If not, what is the obstruction?
\end{mainquestion}

Our main result (Theorem \ref{thm:main} in the main text) works under certain geometric conditions on $X$, which are in particular satisfied for $X = \R^d$.
The result states that if the orthogonal projection onto $\mathcal{H}$ lies in the Roe $C^*$-algebra $C^*(X)$, then the existence of a uniformly localized Wannier basis is obstructed by the corresponding $K$-theory class:

\begin{theoremnonumber}
Let $\mathcal{H} \subseteq L^2(X)$ be a subspace such that the corresponding projection defines a non-trivial element in the $K$-theory of the Roe algebra $C^*(X)$.
Then $\mathcal{H}$ does not admit a uniformly localized Wannier basis with localization centers $\Gamma\subset X$, for any choice of uniformly discrete $\Gamma\subset X$ whatsoever.
\end{theoremnonumber}

In Section \ref{sec:Landau}, we prove that the Landau bands of disordered Landau Hamiltonians encounter this obstruction, so they are not Wannier-localizable in any reasonable sense.

The group $K_0(C^*(X))$ is a computable \emph{coarse}, or \emph{large-scale}, geometric invariant of $X$, and makes sense for a large class of metric spaces. Also, a large class of spectral projections of geometric operators on $X$ belong to $C^*(X)$, and this has been exploited in \cite{EwertMeyer,KLT, LT-hyp}. Of particular conceptual importance is the fact that $X$ need not be rigidly fixed --- small-scale geometric deformations and even topology changes (such as puncturing holes in the sample) are allowed. 
Therefore, the property of non-Wannier-localizability is extremely robust and widely applicable, and is justified as a possible organizing principle for the study of \emph{topological phases} in physics.

\section{Projections in Roe algebras}

Let $X$ be a Riemannian manifold with distance function $d(\cdot,\cdot)$. The Hilbert space $L^2(X)$ is defined with respect to the measure induced from the Riemannian volume element. The algebra $C_0(X)$ is represented on $L^2(X)$ by pointwise-multiplication operators. 

The following two definitions may be found in \cite{EwertMeyer, HRY,  Roebook}.

\begin{definition}\label{dfn:propagation.definitions}
A bounded operator $T\in \mathcal{B}(L^2(X))$ is \emph{locally compact} if the operators $Tf, fT$ are compact whenever $f$ has compact support. It has \emph{finite propagation} if there exists some $R>0$ such that $fTg=0$ whenever the supports of $f,g\in C_0(X)$ are a distance $R$ or more apart. It is \emph{supported near} a subset $Y\subset X$ if there exists $R> 0$ such that $fT=0=Tf$ whenever $f\in C_0(X)$ has $d({\rm supp}\,f, Y)\geq R$.
\end{definition}

\begin{definition}
The \emph{Roe $C^*$-algebra} $C^*(X)$ is the norm-completion of the $*$-algebra of locally compact, finite propagation operators on $L^2(X)$.
For a closed subset $Y\subset X$, the \emph{Roe $C^*$-algebra localized near $Y$}, denoted $C^*_X(Y)$, is the closed ideal in $C^*(X)$ generated by those $T\in C^*(X)$ supported near $Y$.
\end{definition}

A subset $\Gamma\subset X$ is \emph{uniformly discrete} if there exists $r>0$ such that the open balls $B_r(\gamma), \gamma\in\Gamma$ are disjoint. The largest such $r$ is its \emph{packing radius}.

\begin{definition}
Let $\Gamma$ be a uniformly discrete subset of $X$, and $\mathcal{H}$ a Hilbert subspace of $L^2(X)$. A \emph{$\Gamma$-compactly supported Wannier basis for $\mathcal{H}$} is an orthonormal basis $\{w_\gamma\}_{\gamma\in\Gamma}$ for $\mathcal{H}$, such that ${\rm supp}(w_\gamma)\subset B_r(\gamma)$ for all $\gamma\in\Gamma$, where $r$ is the packing radius of $\Gamma$.
\end{definition}

A $\Gamma$-compactly supported Wannier basis is a ``classical'' decomposition of $\mathcal{H}$, in the sense that the wavefunctions $w_\gamma$ are ``pointlike'' with no spatial overlaps whatsoever. 

\begin{lemma}\label{lem:compact.Wannier.in.Roe}
Suppose $\mathcal{H}\subset L^2(X)$ admits a $\Gamma$-compactly supported Wannier basis for some uniformly discrete $\Gamma\subset X$. 
Then the orthogonal projection $p=p_\mathcal{H}$ onto $\mathcal{H}$ belongs to $C^*(X)$.
\end{lemma}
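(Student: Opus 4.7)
The plan is to show that the projection $p = p_{\cH}$ is already a locally compact operator of finite propagation, hence sits in the $*$-algebra whose norm closure defines $C^*(X)$; no genuine limiting argument will be needed. The one tool I would use is the strongly convergent Parseval expansion
\[
p \;=\; \sum_{\gamma\in\Gamma} |w_\gamma\rangle\langle w_\gamma|.
\]

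To establish finite propagation of radius $2r$, I would pick $f,g \in C_0(X)$ with $d(\mathrm{supp}\,f, \mathrm{supp}\,g) \geq 2r$ and look at $fpg$ term by term. For any fixed $\gamma$, since $\mathrm{supp}\,w_\gamma \subseteq B_r(\gamma)$ has diameter strictly less than $2r$, this support cannot meet both $\mathrm{supp}\,f$ and $\mathrm{supp}\,g$. If it misses $\mathrm{supp}\,f$ then $f w_\gamma = 0$; if it misses $\mathrm{supp}\,g$ then $\langle w_\gamma, g\psi\rangle = 0$ for every $\psi\in L^2(X)$. Either way, the rank-one term $f|w_\gamma\rangle\langle w_\gamma|g$ vanishes, so $fpg = 0$.

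For local compactness, take $f\in C_0(X)$ with compact support $K$. Only those $\gamma$ with $B_r(\gamma) \cap K \neq \emptyset$ contribute to $fp$, and all such $\gamma$ lie in the closed $r$-neighborhood of $K$, which is relatively compact in the Riemannian manifold $X$. Since the balls $B_r(\gamma)$ are pairwise disjoint of positive volume and sit inside a set of finite volume, a standard packing argument shows that only finitely many such $\gamma$ occur. Consequently $fp$ reduces to a finite sum of rank-one operators and is compact; the case of $pf$ is symmetric, so $p$ is locally compact.

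Combining these two properties, $p$ belongs to the $*$-algebra of locally compact, finite-propagation operators, hence to its norm completion $C^*(X)$. The only mildly delicate point I anticipate is the finiteness of the contributing index set in the local-compactness step, which rests on basic volume packing on $X$; apart from that, the argument is essentially formal.
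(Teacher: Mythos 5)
Your proof is correct and takes essentially the same approach as the paper's, which simply asserts that $fp$, $pf$ are of finite rank (hence compact) and that $p$ has finite propagation; you spell out the Parseval expansion and packing argument that underlie those two assertions. One small discrepancy: the paper states the propagation is at most the packing radius $r$, whereas your careful estimate gives $2r$ (which is in fact the correct worst-case bound, since $B_r(\gamma)$ has diameter up to $2r$) --- in either case the propagation is finite, which is all that is needed.
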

\begin{proof}
For $f \in C_0(X)$, both $pf$ and $fp$ are of finite rank, hence compact.
Moreover, $p$ has propagation at most the packing radius $r$, by assumption.
\end{proof}

Any projection $p$ in the Roe algebra determines an element $[p]$ in the $K$-theory group $K_0(C^*(X))$. 
The following proposition shows that if such a projection is the orthogonal projection onto a subspace $\mathcal{H} \subset L^2(X)$ admitting a $\Gamma$-compactly supported Wannier basis, then its $K$-theory class is automatically trivial for a large class of spaces $X$.

\begin{proposition}\label{prop:vanishing}
Suppose $X$ admits a decomposition $X=X_1\cup X_2$ into closed subspaces $X_i$, such that $K_0(C^*(X_i))=0, i=1,2$. 
Suppose the range $\mathcal{H}$ of a projection $p\in C^*(X)$ admits a $\Gamma$-compactly supported Wannier basis for some uniformly discrete $\Gamma\subset X$. Then $[p]=0$ in $K_0(C^*(X))$.
\end{proposition}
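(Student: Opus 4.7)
The plan is to split the projection $p$ along the decomposition $X = X_1 \cup X_2$ and use the fact that each piece lives in a localized ideal whose $K$-theory vanishes by hypothesis. Concretely, using the $\Gamma$-compactly supported Wannier basis $\{w_\gamma\}_{\gamma\in\Gamma}$, I would write
\[
p \;=\; \sum_{\gamma\in\Gamma} |w_\gamma\rangle\langle w_\gamma|,
\]
partition $\Gamma = \Gamma_1 \sqcup \Gamma_2$ with $\Gamma_i \subset X_i$ (breaking ties for points in $X_1\cap X_2$ arbitrarily), and set $p_i := \sum_{\gamma\in\Gamma_i}|w_\gamma\rangle\langle w_\gamma|$. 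Since the supports $B_r(\gamma)$ are mutually disjoint, these are two orthogonal projections with $p = p_1 + p_2$.

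Next I would verify that $p_i \in C^*_X(X_i)$. Local compactness and finite propagation (at most $r$) follow exactly as in Lemma~\ref{lem:compact.Wannier.in.Roe}. For the support condition, if $f\in C_0(X)$ satisfies $d(\mathrm{supp}\,f,\,X_i) \geq r$, then for every $\gamma\in\Gamma_i \subset X_i$ the ball $B_r(\gamma)$ is disjoint from $\mathrm{supp}\,f$; since $\mathrm{supp}(w_\gamma) \subset B_r(\gamma)$, this gives $fw_\gamma = 0$, hence $f p_i = 0 = p_i f$. Thus $p_i$ is supported near $X_i$, and so lies in the ideal $C^*_X(X_i)$ by definition.

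The remaining step is to identify the $K$-theory of the localized Roe algebra with that of the ambient space's Roe algebra. I would invoke the standard fact (for closed subspaces $Y\subset X$ with the induced metric) that the inclusion of an ample $Y$-module into an ample $X$-module induces a canonical isomorphism $K_0(C^*(Y)) \xrightarrow{\cong} K_0(C^*_X(Y))$; this is the usual ``localization'' or coarse-excision statement in Roe algebra $K$-theory. Combined with the hypothesis $K_0(C^*(X_i)) = 0$, this forces $[p_i] = 0$ in $K_0(C^*_X(X_i))$, and then the functoriality of $K_0$ under the inclusion $C^*_X(X_i) \hookrightarrow C^*(X)$ gives $[p_i] = 0$ in $K_0(C^*(X))$. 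Since $p_1 \perp p_2$, one concludes $[p] = [p_1] + [p_2] = 0$.

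The main obstacle I expect is the second point — ensuring that $[p_i]$ really does vanish in $K_0(C^*(X))$ and not merely in some abstract $K$-group. The two natural routes are to either cite the isomorphism $K_0(C^*_X(X_i)) \cong K_0(C^*(X_i))$ from the coarse geometry literature, or to argue more directly that $p_i$, viewed as an element of $C^*(X)$, is $K$-theoretically trivial because it already lives in a subalgebra which, up to stabilization/Morita equivalence, is $C^*(X_i)$. Either way the subtlety is purely functorial: the partition of $\Gamma$ and the construction of $p_1, p_2$ are elementary, and the rest is an application of the hypotheses and standard Roe-algebraic machinery.
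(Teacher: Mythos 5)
Your proof is essentially the same as the paper's: split $\Gamma$ along the cover $X_1 \cup X_2$, observe that each partial projection $p_i$ lies in the localized ideal $C^*_X(X_i)$, invoke $K_0(C^*_X(X_i)) \cong K_0(C^*(X_i)) = 0$, and conclude by additivity. The only difference is cosmetic: you spell out the ``supported near $X_i$'' verification (correctly), which the paper leaves implicit in its citation of Lemma~\ref{lem:compact.Wannier.in.Roe}.
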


\begin{proof}
Set $\Gamma_1=X_1\cap\Gamma$. 
Given a $\Gamma$-compactly supported Wannier basis $\{w_\gamma\}_{\gamma \in \Gamma}$  for $\mathcal{H}$, set
\begin{equation*}
\mathcal{H}_1 =\overline{{\rm span}\{w_\gamma\,:\,\gamma\in \Gamma_1\}}, \qquad
\mathcal{H}_2 =\overline{{\rm span}\{w_\gamma\,:\,\gamma\in \Gamma\setminus\Gamma_1\}}.
\end{equation*}
Let $p_i$ be the respective orthogonal projections onto $\mathcal{H}_i, i=1,2$.
By Lemma \ref{lem:compact.Wannier.in.Roe}, we have $p=p_\mathcal{H}\in C^*(X)$, while $p_i$ lies in the Roe algebra $C^*_X(X_i)$ localized near $X_i$. 
 
For $i=1,2$, write $j_i:C^*_X(X_i)\to C^*(X)$ for the inclusion map.
By construction, we have an orthogonal sum
\begin{equation*}
p=p_\mathcal{H}=j_1(p_1)+j_2(p_2)\;\in C^*(X).\label{eqn:basic.split}
\end{equation*}
Because of the isomorphism $K_0(C^*_X(X_i))\cong K_0(C^*(X_i))$ (see \S5, Lemma 1 in \cite{HRY}), we have $[p_i]\in K_0(C^*_X(X_i)) \cong K_0(C^*(X_i))=0$ by assumption. Hence
\begin{equation*}
[p]=[j_1(p_1)]+[j_2(p_2)]=(j_1)_*[p_1]+(j_2)_*[p_2]=0\in K_0(C^*(X)).
\end{equation*}
\end{proof}

\begin{remark}
A standard example satisfying the assumptions of Prop.~\ref{prop:vanishing} is $X$ a Euclidean space, with $X_1, X_2$ the upper and lower Euclidean half-spaces. But there are many more examples, and in fact, the results of this section may be applied to the general coarse geometric setting where $X$ is a proper metric space, and $L^2(X)$ is replaced by an \emph{ample $X$-module}, i.e., a Hilbert space $\mathscr{H}$ equipped with a non-degenerate $*$-representation of $C_0(X)$ such that no non-zero $f\in C_0(X)$ acts as a compact operator. 
In later sections, we will deal with differential operators on $X$, so we eschew such generality.
\end{remark}

\section{Non-existence of localized Wannier bases}

From now on, let $X$ be a Riemannian manifold of dimension $d$.
We assume that $X$ has \emph{polynomial volume growth} in the sense that there exist constants $A, \nu >0$ such that for all $x \in X$ and $R\geq 0$, 
\begin{equation} \label{PolynomialGrowthCondition}
  \mathrm{vol}(B_R(x)) \leq A (1+R)^\nu. 
\end{equation}
We moreover assume that the volume of small balls is bounded from below, i.e., there exist $v, r_0 >0$ such that for all $0 \leq \rho \leq r_0$ and $x\in X$,
\begin{equation} \label{BallEstimateBelow}
  \mathrm{vol}(B_{\rho}(x)) \geq v \rho^d.
\end{equation}

As discussed below, we are interested in the existence of Wannier bases for spectral subspaces associated to Schr\"odinger operators.
For purely analytic reasons, in most cases of interest, there is no hope of finding compactly supported Wannier basis functions $w_\gamma$, so we loosen the localization criterion as follows.

\begin{definition} \label{DefinitionUniformlyLocalized}
Let $\Gamma\subset X$ be a uniformly discrete subset. 
A set of normalized functions $\{w_\gamma\}_{\gamma\in\Gamma}\subset L^2(X)$ is \emph{$\Gamma$-uniformly localized} if for all $\mu>0$, there exists a constant $C_\mu>0$ such that
\begin{equation}
|w_\gamma(x)|\leq C_\mu (1 + d(x, \gamma))^{-\mu},\qquad \forall\gamma\in\Gamma, x \in X.\label{eqn:UL.decay}
\end{equation}
A $\Gamma$-uniformly localized orthonormal basis for a Hilbert subspace $\mathcal{H}\subset L^2(X)$ is called a \emph{$\Gamma$-uniformly localized Wannier basis} for $\mathcal{H}$.
\end{definition}

\begin{remark}\label{rem:subdivision}
In \cite{MMP}, each localization center $\gamma\in\Gamma$ is allowed to have a degeneracy $m(\gamma)\geq 1$, uniformly bounded by some $m_*$. 
By enlarging the set $\Gamma$ of centers, e.g.\ by replacing each $\gamma$ with $m_*$ distinct points in $B_r(\gamma)$, we may remove the degeneracy at the expense of reducing the packing radius.
This is possible while maintaining a positive packing radius because the numbers $m(\gamma)$ are uniformly bounded by $m_*$.
\end{remark}

\begin{remark}
With the assumption that $X$ has polynomial growth, the condition of \eqref{eqn:UL.decay} implies the $L^2$-condition
\begin{equation*}
\int_{X} |w_\gamma(x)|^2(1+d(x,\gamma))^{\mu}\;\leq \tilde{C}_\mu,\qquad \forall\gamma\in\Gamma,
\end{equation*}
for any $\mu \geq 0$.
$L^2$-decay conditions are sometimes preferred in studies of Wannier bases. For $X=\R^d$, the projection for a spectral subspace $\mathcal{H}$ is often known to have exponentially-decaying kernel away from the diagonal (Combes--Thomas estimates). In such situations, if $\mathcal{H}$ admits a $\Gamma$-uniformly localized Wannier basis with decay condition in the above $L^2$ sense, then it also admits one with the $L^\infty$ decay condition; see,  e.g., Lemma 2.6 of \cite{MMP}.
\end{remark}

\begin{proposition}\label{prop:vN.argument}
Suppose a Hilbert subspace $\mathcal{H}\subset L^2(X)$ admits a $\Gamma$-uniformly localized Wannier basis. 
Then its spectral projection $p_\mathcal{H}$ is contained in $C^*(X)$ and, as an element of this algebra, is Murray--von Neumann equivalent to a projection onto the span of some $\Gamma$-compactly supported orthonormal set.
\end{proposition}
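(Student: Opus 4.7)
The plan is to exhibit an explicit partial isometry $v\in C^*(X)$ that realizes the Murray--von Neumann equivalence between $p_\mathcal{H}$ and a concretely chosen $\Gamma$-compactly supported projection. First, for each $\gamma\in\Gamma$ I would choose any unit vector $v_\gamma\in L^2(X)$ supported in $B_r(\gamma)$; such a choice is possible since \eqref{BallEstimateBelow} guarantees that $B_r(\gamma)$ has positive volume, and the mutual disjointness of these balls forces $\{v_\gamma\}_{\gamma\in\Gamma}$ to be orthonormal. Let $q:=\sum_\gamma|v_\gamma\rangle\langle v_\gamma|$ be the orthogonal projection onto $\overline{\mathrm{span}}\{v_\gamma\}$; by Lemma~\ref{lem:compact.Wannier.in.Roe}, $q\in C^*(X)$. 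Define $v:=\sum_{\gamma\in\Gamma}|v_\gamma\rangle\langle w_\gamma|$ as a bounded operator of norm at most $1$ by the obvious factorization through $\ell^2(\Gamma)$. Orthonormality of both sets immediately yields
\[
v^*v=\sum_\gamma|w_\gamma\rangle\langle w_\gamma|=p_\mathcal{H},\qquad vv^*=\sum_\gamma|v_\gamma\rangle\langle v_\gamma|=q,
\]
so once $v$ is shown to lie in $C^*(X)$, the proposition follows: $p_\mathcal{H}=v^*v$ is then automatically in $C^*(X)$ and Murray--von Neumann equivalent to the $\Gamma$-compactly supported projection $q$.

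To place $v$ in $C^*(X)$, I would approximate it by truncations. Fix a continuous cutoff $\phi_R$ on $[0,\infty)$ with $\phi_R\equiv 1$ on $[0,R]$ and $\phi_R\equiv 0$ on $[R+1,\infty)$, set $w_\gamma^R(x):=\phi_R(d(x,\gamma))w_\gamma(x)$, and put $v_R:=\sum_\gamma|v_\gamma\rangle\langle w_\gamma^R|$. Then $v_R$ has propagation at most $r+R+1$, because each rank-one summand is supported in $B_r(\gamma)\times B_{R+1}(\gamma)$ and the triangle inequality rules out nonzero contributions when $d(\mathrm{supp}\,f,\mathrm{supp}\,g)>r+R+1$. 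For $f\in C_c(X)$, the expansion $v_R M_f=\sum_\gamma|v_\gamma\rangle\langle\bar f w_\gamma^R|$ converges absolutely in operator norm, the sum being dominated by $\sum_\gamma\|\bar f w_\gamma^R\|_2\lesssim\sum_\gamma(1+d(\gamma,\mathrm{supp}\,f))^{-\mu}$, which is finite for $\mu>\nu$ thanks to the uniform discreteness of $\Gamma$ combined with the polynomial growth \eqref{PolynomialGrowthCondition}. Hence each $v_R$ is locally compact and of finite propagation, so $v_R\in C^*(X)$.

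The main technical obstacle is the operator-norm convergence $v_R\to v$. Setting $e_\gamma^R:=w_\gamma-w_\gamma^R$, which is supported in $\{d(\cdot,\gamma)\geq R\}$ and still obeys the decay \eqref{eqn:UL.decay}, one has $v-v_R=\sum_\gamma|v_\gamma\rangle\langle e_\gamma^R|$. Applying the elementary estimate
\[
\Bigl\|\sum_\gamma|a_\gamma\rangle\langle b_\gamma|\Bigr\|^2\leq\Bigl\|\sum_\gamma|a_\gamma\rangle\langle a_\gamma|\Bigr\|\cdot\Bigl\|\sum_\gamma|b_\gamma\rangle\langle b_\gamma|\Bigr\|,
\]
which follows by factoring the middle sum through $\ell^2(\Gamma)$, together with $\|\sum_\gamma|v_\gamma\rangle\langle v_\gamma|\|=\|q\|=1$, reduces the problem to bounding the positive operator $E_R:=\sum_\gamma|e_\gamma^R\rangle\langle e_\gamma^R|$ in norm. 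Schur's test applied to its kernel, combined with a dyadic decomposition that uses the polynomial growth \eqref{PolynomialGrowthCondition} to control $\sup_x\sum_\gamma(1+d(x,\gamma))^{-\mu}$ and that consumes part of the decay via the support constraint $d(x,\gamma),d(y,\gamma)\geq R$, yields $\|E_R\|\lesssim(1+R)^{\nu-\mu}$, which tends to $0$ as $R\to\infty$ once $\mu$ is chosen sufficiently large in \eqref{eqn:UL.decay}. Norm-closedness of $C^*(X)$ then places $v$ in $C^*(X)$, completing the proof.
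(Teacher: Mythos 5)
Your argument is correct and follows essentially the same route as the paper: an explicit partial isometry $v=\sum_\gamma|v_\gamma\rangle\langle w_\gamma|$ (the adjoint of the paper's $V$), truncation of each $w_\gamma$ to get finite-propagation approximants, and a Schur-type bound (ultimately resting on the polynomial-growth estimates of Lemma~\ref{LemmaEstimateContDisc}) to show the truncations converge in norm. The one genuine variation is in how the norm estimate is organized: the paper applies the Schur test directly to the kernel of $V-V^R$, which forces it to choose the $v_\gamma$ with a uniform $L^\infty$ bound $|v_\gamma|\le c_0$ (this is where \eqref{BallEstimateBelow} and the constraint $\rho<\min\{r_0,r\}$ enter), whereas you first factor $v-v_R=AB^*$ through $\ell^2(\Gamma)$, absorb the $v_\gamma$'s via $\|q\|=1$, and Schur-test only the self-adjoint factor $E_R=\sum_\gamma|e_\gamma^R\rangle\langle e_\gamma^R|$. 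Your version is marginally cleaner in that it needs no pointwise bound on the $v_\gamma$ at all --- your appeal to \eqref{BallEstimateBelow} to ``choose'' the $v_\gamma$ is actually unnecessary (balls have positive volume regardless). Two small points of hygiene: the ``dyadic decomposition'' you invoke is not needed --- the bounds $\sup_x\sum_{d(x,\gamma)\ge R}(1+d(x,\gamma))^{-\mu}\lesssim(1+R)^{\nu-\mu}$ and $\sup_\gamma\int_{X\setminus B_R(\gamma)}(1+d(x,\gamma))^{-\mu}dx\lesssim(1+R)^{\nu-\mu}$ are exactly \eqref{PolyGrowthDisc} and \eqref{PolyGrowthCont}, and they give $\|E_R\|\lesssim(1+R)^{2(\nu-\mu)}$ directly; and for local compactness of $v_R$ you should also note that $M_fv_R$ has finite rank (only finitely many $B_r(\gamma)$ meet a compact $\operatorname{supp}f$), not just that $v_RM_f$ is compact.
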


We need the following lemma.

\begin{lemma} \label{LemmaEstimateContDisc}
Let $\nu$ be the volume growth exponent of $X$ from \eqref{PolynomialGrowthCondition}.
Then for all $\mu > \nu$, there exists $C_1, C_2>0$ such that for all $x\in X$ and $R\geq 0$,
\begin{align}
 \label{PolyGrowthCont} 		
 \int_{X \setminus B_R(x)} (1+d(x, y))^{-\mu} dy &\leq C_1 (1+R)^{\nu-\mu}\\
\text{and} \qquad \qquad \sum_{\substack{\gamma \in \Gamma \\ d(x, \gamma) \geq R}} (1+d(x, \gamma))^{-\mu} &\leq C_2 (1+R)^{\nu-\mu}.
 \label{PolyGrowthDisc}
\end{align}
\end{lemma}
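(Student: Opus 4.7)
The plan is to obtain both estimates by the same standard dyadic-shell device, reducing each to a geometric series whose convergence is ensured by the hypothesis $\mu > \nu$.

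First I would dispose of \eqref{PolyGrowthCont}. Set $R_k = 2^k(1+R)$ and decompose $X \setminus B_R(x)$ into the initial piece $B_{R_0}(x) \setminus B_R(x)$ and the dyadic shells $A_k = B_{R_{k+1}}(x) \setminus B_{R_k}(x)$ for $k \geq 0$. On each of these pieces, $(1+d(x,y))^{-\mu} \leq (1+R_k)^{-\mu}$, while the polynomial growth bound \eqref{PolynomialGrowthCondition} gives $\mathrm{vol}(A_k) \leq A(1+R_{k+1})^\nu \leq A\cdot 2^\nu (1+R_k)^\nu$ (since $1+R_{k+1} \leq 2(1+R_k)$). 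The contribution of the $k$th shell is therefore at most $A\cdot 2^\nu (1+R_k)^{\nu-\mu}$, and since $(1+R_k)^{\nu-\mu} = 2^{k(\nu-\mu)}(1+R)^{\nu-\mu}$, summing the geometric series in $2^{k(\nu-\mu)}$ (which converges because $\mu > \nu$) produces the desired bound $C_1(1+R)^{\nu-\mu}$, with $C_1$ depending only on $A$, $\nu$, $\mu$.

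Next, for the discrete estimate \eqref{PolyGrowthDisc}, the identical dyadic decomposition reduces the problem to counting $\#(\Gamma \cap A_k)$. Here the lower volume bound \eqref{BallEstimateBelow} enters: let $\rho = \min(r, r_0)$, where $r$ is the packing radius of $\Gamma$. Then the balls $B_\rho(\gamma), \gamma \in \Gamma$, are pairwise disjoint and each has volume at least $v\rho^d$. Since $\bigsqcup_{\gamma \in \Gamma\cap A_k} B_\rho(\gamma) \subset B_{R_{k+1}+\rho}(x)$, one obtains
\[ \#(\Gamma \cap A_k) \leq v^{-1}\rho^{-d}\,\mathrm{vol}(B_{R_{k+1}+\rho}(x)) \leq v^{-1}\rho^{-d} A(1+R_{k+1}+\rho)^\nu \leq C'(1+R_k)^\nu \]
for a constant $C'$ depending only on $A$, $\nu$, $v$, $\rho$, $d$. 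The same geometric summation then yields \eqref{PolyGrowthDisc}, with $C_2$ depending additionally on $v$ and $\rho$.

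The only conceptual point is the interplay between polynomial growth (upper volume bound) and the packing property of $\Gamma$ (where \eqref{BallEstimateBelow} converts the volume estimate into a counting estimate). Everything else is a routine geometric-series calculation. I expect the main inconvenience to be minor bookkeeping near $R = 0$ and tracking the constants through the estimates, but nothing more substantial.
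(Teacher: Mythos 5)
Your argument is correct, and it takes a genuinely different route from the paper's. For \eqref{PolyGrowthCont}, the paper invokes the coarea formula to rewrite the integral as $\int_R^\infty (1+s)^{-\mu}\,\frac{d}{ds}\mathrm{vol}(B_s(x))\,ds$, then integrates by parts against the volume bound; you instead decompose into dyadic shells $A_k = B_{R_{k+1}}(x)\setminus B_{R_k}(x)$ and sum a geometric series. These give the same quality of bound; the dyadic argument is a bit more elementary (it avoids the coarea formula and its applicability conditions, which the paper simply assumes hold for the Riemannian $X$ at hand), while the coarea/integration-by-parts route is shorter to write out. For \eqref{PolyGrowthDisc}, the difference is more substantive: the paper \emph{derives} the discrete estimate from the continuous one by smearing each point mass over a small ball $B_\varepsilon(\gamma)$ (using disjointness of the balls and \eqref{BallEstimateBelow} to convert the sum into an integral dominated by \eqref{PolyGrowthCont}), whereas you redo the dyadic decomposition and directly bound the cardinality $\#(\Gamma\cap A_k)$ by a packing argument, again using disjointness of the $B_\rho(\gamma)$ and the lower volume bound. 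Both are standard; your treatment is symmetric across the two estimates, whereas the paper's economizes by reusing the first in the second. No gaps: the key inequalities $1+R_{k+1}\leq 2(1+R_k)$ and $(1+R_k)^{\nu-\mu}\leq 2^{k(\nu-\mu)}(1+R)^{\nu-\mu}$ (the latter using $\nu-\mu<0$) are both correct, the disjointness of the balls $B_\rho(\gamma)$ for $\rho\leq r$ is exactly the packing property, and the hypothesis $\mu>\nu$ is what makes the geometric series converge, exactly as in the paper.
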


\begin{proof}
By the coarea formula, we get
\begin{equation*}
\int_{X \setminus B_R(x)} (1+d(x, y))^{-\mu} dy = \int_R^\infty (1+s)^{-\mu} \frac{d}{d s} \mathrm{vol}(B_{s}(x)) ds  \leq \frac{A\nu}{\mu-\nu}(1+R)^{\nu - \mu},
\end{equation*}
where in the second step, we integrated by parts, used \eqref{PolynomialGrowthCondition} and evaluated the integral.
This gives the estimate \eqref{PolyGrowthCont} with $C_1 = \frac{A\mu}{\mu-\nu}$.

The second estimate follows from the first. 
Indeed, let $0<\varepsilon<\min\{1,r\}$ and $R\geq 0$.
Then for any $\gamma \in \Gamma$ with $d(x, \gamma) \geq R$ and any $y\in B_\varepsilon(\gamma)$, we have
\begin{equation}
d(x,y) >R-\varepsilon,\label{eqn:distance1}
\qquad \mathrm{and} \qquad
1+d(x,\gamma)>1+d(x,y)-\varepsilon>0.
\end{equation}
With these, we can estimate the following sum for any $x\in X$,
\begin{align*}
  \sum_{\substack{\gamma \in \Gamma \\ d(x, \gamma) \geq R}} (1+d(x, \gamma))^{-\mu}
  &=\sum_{\substack{\gamma \in \Gamma \\ d(x, \gamma) \geq R}} \frac{1}{\mathrm{vol}(B_{\varepsilon}(\gamma))}\int_{B_\varepsilon(\gamma)}\left(1+d(x, \gamma)\right)^{-\mu} dy\\
  &\leq  \sum_{\substack{\gamma \in \Gamma \\ d(x, \gamma) \geq R}} \frac{1}{\mathrm{vol}(B_{\varepsilon}(\gamma))}\int_{B_\varepsilon(\gamma)}\left(1+d(x, y)-\varepsilon\right)^{-\mu} dy & &\mathrm{by\;}\eqref{eqn:distance1}\\
  &\leq \frac{1}{v \varepsilon^d} \sum_{\substack{\gamma \in \Gamma \\ d(x, \gamma) \geq R}}\int_{B_\varepsilon(\gamma)} \left(1 + d(x, y)- \varepsilon\right)^{-\mu} dy& &\mathrm{by\;}\eqref{BallEstimateBelow}\\
  &\leq \frac{1}{v \varepsilon^d}\int_{X\setminus B_{R-\varepsilon}(x)}\left(1 + d(x, y)- \varepsilon\right)^{-\mu} dy & &\mathrm{by\;}\eqref{eqn:distance1}\\
  &= \frac{1}{v \varepsilon^d}\int_{X\setminus B_{R-\varepsilon}(x)}\left(1 + d(x, y)\right)^{-\mu}\left(1-\frac{\varepsilon}{1+d(x,y)}\right)^{-\mu} dy\\
  &\leq \frac{(1-\varepsilon)^{-\mu}}{v\varepsilon^d}\int_{X\setminus B_{R-\varepsilon}(x)}(1+d(x,y))^{-\mu}\,dy\\
  &\leq  \frac{(1-\varepsilon)^{-\mu}C_1}{v \varepsilon^d} (1+R-\varepsilon)^{\nu-\mu}.& &\mathrm{by\;}\eqref{PolyGrowthCont}
\end{align*}
We may replace $(1+R-\varepsilon)^{\nu-\mu}$ by $(1+R)^{\nu-\mu}$ at the expense of further increasing the constant, thus arriving at the estimate \eqref{PolyGrowthDisc}.
\end{proof}

\begin{proof}[of Prop.~\ref{prop:vN.argument}]
Let $\{w_\gamma\}_{\gamma \in \Gamma}$ be a $\Gamma$-uniformly localized Wannier basis for $\mathcal{H}$. 
As usual, $r$ denotes the packing radius of $\Gamma$, while $r_0$ is the constant in \eqref{BallEstimateBelow} controlling the volume of small balls in $X$.
Choose a $\Gamma$-compactly supported orthonormal set $\{v_\gamma\}_{\gamma\in\Gamma}$ such that each $v_\gamma$ is supported in $B_{\rho}(\gamma)$, where $\rho<\min\{r_0,r\}$. 
Due to \eqref{BallEstimateBelow}, we may assume that the $v_\gamma$ have been chosen such that a bound $|v_\gamma(x)|\leq c_0$ holds for all $\gamma\in\Gamma, x\in X$.

Set
\begin{equation} \label{FormulaForV}
  V = \sum_{\gamma \in \Gamma} w_\gamma \otimes v_\gamma^*,
\end{equation}
where the sum converges strongly.
Then $V$ is a Murray--von Neumann equivalence between the orthogonal projection $p_\mathcal{H}$ onto $\mathcal{H}$, and the projection onto the span of the $v_\gamma$.

We claim that $V$ is contained in $C^*(X)$.
To this end, for $R>0$, let $w_\gamma^R$ be the function that coincides with $w_\gamma$ on $B_R(\gamma)$ and vanishes on $X \setminus B_R(\gamma)$.
Let $V^R$ be the operator defined by the same formula \eqref{FormulaForV} but using $w_\gamma^R$ instead of $w_\gamma$.
Clearly, $V^R$ is locally finite rank and has propagation at most $\max\{R,r\}$, hence $V^R \in C^*(X)$.

We need to estimate the operator norm of $V-V^R$.
Write $K^R(x, y)$ for the kernel of $V-V^R$. 
Fix a $\mu > \nu$, where $\nu$ is the volume growth exponent of $X$ from \eqref{PolynomialGrowthCondition}. For those $y\in X$ that lie inside $B_r(\gamma)$ for some $\gamma\in\Gamma$, we calculate
\begin{align*}
  \int_X |K^R(x, y)|\, dx\; &= \int_X |w_\gamma(x) - w_\gamma^R(x)||v_\gamma(y)|\, dx \\
    &\leq |v_\gamma(y)| \cdot \int_{X \setminus B_R(\gamma)} (1+d(x, \gamma))^{-\mu}\, dx \\
    &\leq c_0 C_1 (1+R)^{\nu-\mu}  =: \delta_1(R).& &\mathrm{by\;}\eqref{PolyGrowthCont} \\
\end{align*}
As for those $y\in X$ which are not contained in any $B_r(\gamma)$, the above inequality still holds since the integrand vanishes.

Next, by ${\rm supp}(v_\gamma)\subset B_\rho(\gamma)\subset B_r(\gamma)$, the bound $|v_\gamma(x)|\leq c_0$, the H\"{o}lder inequality, and the estimate \eqref{PolynomialGrowthCondition},
\begin{equation*}
\int_X |v_\gamma(y)|dy =\int_{B_r(\gamma)}|v_\gamma(y)|dy\leq  c_0\cdot\mathrm{vol}(B_r(\gamma))\leq c_0 A(1+r)^\nu.
\end{equation*}
Then for all $x\in X$, we have the estimate
\begin{align*}
  \int_X |K^R(x, y)| dy &\leq \sum_{\gamma \in \Gamma} \int_X |w_\gamma(x)-w_\gamma^R(x)||v_\gamma(y)| dy\\
  &\leq C_\mu \cdot \sum_{\substack{\gamma \in \Gamma \\ d(x, \gamma) \geq R}} (1+d(x, \gamma))^{-\mu} \cdot \int_X|v_\gamma(y)|dy  & & \mathrm{by\;}\eqref{eqn:UL.decay}\\
  &\leq C_\mu \cdot C_2 (1+ R)^{\nu-\mu} \cdot c_0 A (1+r)^{\nu} =: \delta_2(R). & &\mathrm{by\;}\eqref{PolyGrowthDisc}\\
\end{align*}
For any $f \in L^2(X)$, we therefore obtain
\begin{align*}
  \|(V-V^R)f\|^2 &\leq \int_X\left(\int_X |K^R(x, y)| |f(y)|dy\right)^2 dx \\
  &\leq \int_X\left(\int_X |K^R(x, y)| dy\right) \cdot \left(\int_X |K^R(x, y)| |f(y)|^2 dy\right) dx\\
  &\leq \delta_2(R) \cdot \int_X \left(\int_X |K^R(x, y)| dx\right) |f(y)|^2 dy\\
  &\leq \delta_2(R) \delta_1(R) \|f\|^2.
\end{align*}
Since $\delta_1(R)$ and $\delta_2(R)$ converge to zero as $R\to\infty$, we obtain that $\|V - V^R\|$ converges to zero in this limit.
Thus $V$, as a limit of elements in $C^*(X)$, is also contained in $C^*(X)$.
We conclude that $p_{\mathcal{H}} = VV^* \in C^*(X)$, and $V$ implements the Murray--von-Neumann equivalence between $p_{\mathcal{H}}$ and $V^*V$, the projection onto the subspace spanned by the $v_\gamma$.
\end{proof}

Finally, we arrive at our main theorem, which provides an obstruction to the existence of $\Gamma$-uniformly localized Wannier bases for general Hilbert subspaces $\mathcal{H} \subseteq L^2(X)$.

\begin{theorem}\label{thm:main}
Suppose $X$ admits a decomposition as in Prop.\ \ref{prop:vanishing}. Let $p \in C^*(X)$ be a projection defining a non-trivial element $[p]$ in $K_0(C^*(X))$.
Then its range $\mathcal{H} \subseteq L^2(X)$ does not admit a $\Gamma$-uniformly localized Wannier basis, for any choice of uniformly discrete $\Gamma\subset X$ whatsoever.
\end{theorem}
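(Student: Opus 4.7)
The plan is to prove the theorem by contradiction, combining the two propositions already established. Suppose that $\mathcal{H}$ does admit a $\Gamma$-uniformly localized Wannier basis $\{w_\gamma\}_{\gamma \in \Gamma}$ for some uniformly discrete $\Gamma \subset X$. First I would invoke Proposition \ref{prop:vN.argument}, which produces a $\Gamma$-compactly supported orthonormal set $\{v_\gamma\}_{\gamma \in \Gamma}$ in $L^2(X)$ together with a partial isometry $V \in C^*(X)$, namely $V = \sum_\gamma w_\gamma \otimes v_\gamma^*$, implementing a Murray--von Neumann equivalence within $C^*(X)$ between $p = p_\mathcal{H}$ and the projection $q$ onto the closed span of the $v_\gamma$. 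Consequently $[p] = [q]$ in $K_0(C^*(X))$.

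Second, I would apply Proposition \ref{prop:vanishing} to $q$. Its range carries, by construction, the $\Gamma$-compactly supported Wannier basis $\{v_\gamma\}$, so assuming $X$ admits a decomposition $X = X_1 \cup X_2$ into closed subspaces with $K_0(C^*(X_i)) = 0$ --- as holds for Euclidean $\R^d$ via the upper/lower half-space decomposition, and more generally in the coarse-geometric settings of interest here --- that proposition forces $[q] = 0$. Combining with the previous step yields $[p] = 0$, contradicting the hypothesis that $[p]$ is non-trivial.

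In this sense the theorem is essentially a one-line corollary of the two propositions. The substantive analytic input is already absorbed into Proposition \ref{prop:vN.argument}, where polynomial decay at $\Gamma$, together with the volume bounds \eqref{PolynomialGrowthCondition} and \eqref{BallEstimateBelow}, suffices to realise $p$ as a Roe-algebraic Murray--von Neumann equivalent of a compactly supported projection; the K-theoretic content resides in Proposition \ref{prop:vanishing} as a Mayer--Vietoris-style vanishing. The main conceptual obstacle, rather than any analytic one, is therefore to confirm that the ambient space $X$ admits the required closed decomposition. This is a mild demand satisfied in all the physical applications targeted by the paper, but any extension of the theorem to exotic manifolds not covered by Proposition \ref{prop:vanishing} would require separate coarse-geometric input to kill $[q]$ in $K_0(C^*(X))$.
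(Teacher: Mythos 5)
Your proof is correct and follows the paper's own route exactly: combine Proposition \ref{prop:vN.argument} (which realizes $p_{\mathcal H}$ as a Murray--von Neumann equivalent, inside $C^*(X)$, of a projection $q$ onto the span of a $\Gamma$-compactly supported orthonormal set) with Proposition \ref{prop:vanishing} (which kills $[q]$), and take the contrapositive. The paper's proof of Theorem \ref{thm:main} is literally the one-liner you predicted.

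Your closing observation is also apt and worth emphasizing: as stated, Theorem \ref{thm:main} silently inherits the hypothesis of Proposition \ref{prop:vanishing} that $X$ decomposes as $X = X_1 \cup X_2$ with $K_0(C^*(X_i)) = 0$. This is not a cosmetic omission. For instance, if $X$ is compact then $C^*(X)$ is the compacts, every nonzero projection has nontrivial $K_0$-class, and a rank-one spectral projection onto a bump function supported in a small ball trivially admits a $\Gamma$-compactly supported Wannier basis with $|\Gamma|=1$; so the conclusion genuinely fails without some coarse-geometric input forcing $[q]=0$. In the paper's intended applications ($\R^d$, hyperbolic plane, helicoid, and their uniform perturbations) the half-space decomposition into flasque pieces is available, so no harm is done, but a careful statement of the theorem should carry the decomposition (or an equivalent) as an explicit standing assumption.
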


\begin{proof}
Suppose $\mathcal{H}$ admits a $\Gamma$-uniformly localized Wannier basis. 
Then it follows from Prop.\ \ref{prop:vN.argument} and Prop.\ \ref{prop:vanishing} that $p$ must be trivial in $K_0(C^*(X))$. The contrapositive yields the theorem.
\end{proof}

The main source of projections to be used in Thm.~\ref{thm:main} are spectral projections of magnetic Schr\"{o}dinger operators $H$ on $X$, with smooth and bounded scalar potential and curvature (magnetic field strength) terms. 
We assume $H$ to be self-adjoint and bounded from below, and write $\sigma(H)$ for its spectrum.
Let $S\subset \sigma(H)$ be compact and separated from $\sigma(H)\setminus S$. 
Then its spectral projection $p_S$, with range $\mathcal{H}_S=p_SL^2(X)$, can be defined as $p_S=\varphi_S(H)$, for a suitable continuous function $\varphi_S\in C_0(\R)$. 
As in the case of Dirac operators \cite{Roebook}, we then have \cite{LT-hyp}
\begin{equation*}
p_S=\varphi_S(H)\in C^*(X).
\end{equation*}
Thm.~\ref{thm:main} implies that if $p_S$ is non-trivial in $K_0(C^*(X))$, then $\mathcal{H}_S=p_SL^2(X)$ cannot admit a $\Gamma$-uniformly localized Wannier basis, for any choice of uniformly discrete $\Gamma\subset X$.

\section{Landau band examples and further discussion}\label{sec:Landau}

We will provide examples of $p_S$ which are \emph{non-trivial} in $K_0(C^*(X))$, and which therefore encounters the Wannier-localizability obstruction of Theorem \ref{thm:main}.

\medskip

It is a deep result that for a large class of metric spaces $X$, there is an assembly map \cite{HR-coarse} implementing an isomorphism
\begin{equation*}
\mu_X:K_0(X)\longrightarrow K_0(C^*(X)),
\end{equation*}
where $K_0(X)$ is the Kasparov $K$-homology group. For example, if $X=\R^{2n}$, the isomorphism is $\Z\to \Z$, and is given on a generator by the \emph{coarse index} of the Dirac operator on $\R^{2n}$ \cite{Roebook}. The same is true of the hyperbolic plane and helicoid, and this was exploited in \cite{LT-hyp} and \cite{KLT} respectively.

Therefore, Dirac operator kernels are natural candidate examples of spectral subspaces that are non-Wannier-localizable. However, there is a complication because the coarse index of a Dirac operator 
$
D=\begin{pmatrix} 0 & D^-\\ D^+ & 0 \end{pmatrix}
$
 is not generally represented as
\begin{equation*}
{\rm Ind}(D)=[{\rm ker}D^+]-[{\rm coker}D^+],
\end{equation*}
unless $0$ is known to be isolated in the spectrum of $D$. This spectral separation can be achieved by an appropriate twisting of the standard Dirac operator \cite{LT-hyp,KLT}, corresponding to coupling it to a line bundle with non-vanishing curvature (i.e., a magnetic field).

\medskip

Let us see the effect of the magnetic field on the Dirac spectrum, in the basic example where $X=\R^2$ is the Euclidean plane with Riemannian volume form ${\rm vol}=dx\wedge dy$. For $b>0$, let $D_b$ be the Dirac operator acting on the spinor bundle 
coupled to a line bundle with curvature $b\cdot {\rm vol}$. 
Explicitly, in an appropriate trivialization, and writing $\partial=\partial_x-i\partial_y$, $\bar{\partial}=\partial_x+i\partial_y$, we have
\begin{equation*}
D_b=\begin{pmatrix} 0 & -i\partial +ibx \\-i\bar{\partial}-ibx & 0 	\end{pmatrix},\qquad D_b^2=\begin{pmatrix}
H_b-b & 0 \\ 0 & H_b+b 
\end{pmatrix}\geq 0,
\end{equation*}
where
\begin{equation*}
H_b=-\nabla^2+2ibx\partial_y+b^2x^2=-\partial_x^2-(\partial_y-ibx)^2
\end{equation*}
is the free Landau Hamiltonian (magnetic Laplacian). The two summands in $D_b^2$ have the same non-zero spectrum,
\begin{equation*}
\sigma(H_b-b)\setminus\{0\}=\sigma(\underbrace{H_b+b}_{\geq b\,>\,0})\setminus\{0\}=\sigma(H_b+b).
\end{equation*}
The above shift-invariance in the spectrum of $H_b$ implies that $\sigma(H_b)=(2\N+1)b$, with each eigenvalue $(2n+1)b$ being infinitely-degenerate.

In particular, for $n=0$, we learn that the lowest Landau level (LLL) eigenspace,
\begin{equation*}
{\rm ker}(H_b-b)={\rm ker}(D_b^2)={\rm ker}(D_b)={\rm ker}(D_b^+),
\end{equation*}
is precisely the Dirac kernel, and it is isolated in the Dirac spectrum. Therefore, the LLL spectral projection $p_{\rm LLL}$ for the Landau Hamiltonian $H_b$ is exactly the kernel projection for $D_b$, so
\begin{equation*}
[p_{\rm LLL}]={\rm Index}(D_b)\in K_0(C^*(\R^2)).
\end{equation*}
In turn, this \emph{coarse Dirac index} is a generator of $K_0(C^*(\R^2))\cong\Z$, because the coarse Baum--Connes assembly map is an isomorphism for $X=\R^2$.  

Let us now introduce to the free Landau Hamiltonian $H_b$, a potential function $V$ with ${\sup}_{x}|V(x)|<b$. Then the magnetic Schr\"{o}dinger operator $H_b+V$ will have spectrum lying within a disjoint union of bands,
\begin{equation*}
\sigma(H_b+V)\subset \bigsqcup_{n\in\N} S_n,\qquad S_n=\left(2nb,2(n+1)b\right).
\end{equation*}
We call $\sigma(H_b+V)\cap S_n$ the \emph{$n$-th Landau band} (even though it may not necessarily be a connected interval). Spectral gaps are maintained between adjacent Landau bands, so we may still ask for the projection $p_{n,V}$ onto the spectral subspace for the $n$-th Landau band. For instance, let $\varphi$ be a continuous (or even smooth) bump function with $\varphi(\lambda)=1, \lambda\in S_0$, and vanishing on the other all the other $n>0$ Landau bands. Since $\varphi\in C_0(\R)$, we have
\begin{equation*}
p_{0,V}=\varphi(H_b+V)\in C^*(\R^2).
\end{equation*}
Using the resolvent identity, we observe that $t\mapsto H_b+tV$ is norm-resolvent continuous in $t$, so decreasing $t$ from $1$ to $0$ produces a a homotopy of projections in the Roe algebra,
\begin{equation*}
p_{0,V}=\varphi(H_b+V)\sim_h \varphi(H_b)=p_{\rm LLL}\in C^*(\R^2),
\end{equation*}
and therefore $[p_{0,V}]=[p_{\rm LLL}]\neq 0$ in $K_0(C^*(\R^2))$. 

Invoking Theorem \ref{thm:main}, we conclude that the spectral subspace for $0$-th Landau band is not Wannier localizable, independently of the potential $V$ satisfying ${\rm sup}_x|V(x)|<b$, whatever the choice of uniformly discrete $\Gamma\subset \R^2$ for the localization centers. The same applies to the $n>0$ Landau bands, because each Landau level projection also represents the Dirac index class \cite{LT-hyp}.

\paragraph{Discussion.} 

On the one hand, a disordered Landau Hamiltonian should have some energy interval $I$ for which the spectral subspace is spanned by localized eigenstates --- this allows for invariance of Hall conductance as the Fermi energy is varied within $I$ (the plateaux in the famous experiments). For this effect, a weaker form of localization (\emph{Anderson} localization) where the decay rate of the eigenfunctions may not be uniform (\S7 of \cite{Rio}), is enough. It is challenging to rigorously demonstrate that such localization occurs, even for specific models of random potentials in Euclidean space, see e.g.\ \cite{Wang}.

On the other hand, it should not be the case that disorder causes an entire Landau band to become spanned by localized eigenstates, otherwise the Hall conductance would remain constant across the whole Landau band instead of jumping in value between the plateaux \cite{AG,AvSS,BES,Germinet,Halperin,PSB}. 
Our Theorem \ref{thm:main} gives a very quick conceptual proof of delocalization in the Wannier sense (Eq.\ \eqref{eqn:UL.decay}), for much more general topological insulator systems and geometries than the quantum Hall effect on the Euclidean plane. We should mention that the stronger \emph{Anderson} delocalization of Landau bands is known in certain Euclidean space models \cite{BES,Germinet} using different analytic techniques. Whether our coarse geometric techniques can be combined with the latter is under study.

Finally, for a possible generalization of Wannier non-localizability when no spectral gaps are known to be available even in the $V=0$ clean limit (e.g.\ the Landau bands broaden and overlap due to variations in the Riemannian curvature of $X$ and/or magnetic field), see \cite{KLT} for the notion of a \emph{delocalized coarse index}.

\section*{Acknowledgements}
M.L.\ acknowledges support from SFB 1085 ``Higher invariants'' of the DFG.
G.C.T.\ thanks G.M.\ Graf, E.\ Prodan, and Y.\ Kubota for helpful correspondence, and acknowledges support from Australian Research Council DP200100729.

\section*{Author Declarations}
\subsubsection*{Conflict of Interest}
The authors have no conflicts to disclose.
\subsubsection*{Data Availability}
Data sharing is not applicable to this article as no new data were created or analyzed in this study.


\begin{thebibliography}{99}
\bibitem{AG}
Aizenman, M., Graf, G.M.: Localization bounds for an electron gas. J. Phys. A: Math. Gen. {\bf 31} 6783--6806 (1998)
\bibitem{AvSS}
Avron, J.E., Seiler, R., Simon, B.: Charge deficiency, charge transport and comparison of dimensions. Commun. Math. Phys. {\bf 159} 399--422 (1994)
\bibitem{BES}
Bellissard, J., van Elst, A., Schulz-Baldes, H.: The noncommutative geometry of the quantum Hall effect. J. Math. Phys. {\bf 35} 5373--5451 (1994)
\bibitem{Brouder}
Brouder, C., Panati, G., Calandra, M., Mourougane, C., Marzari, N.: Exponential localization of Wannier functions in insulators. Phys. Rev. Lett. {\bf 98}(4) 046402 (2007)
\bibitem{EwertMeyer}
Ewert, E.E., Meyer, R.: Coarse geometry and topological phases. Commun. Math. Phys. {\bf 366}(3) 1069--1098 (2019)
\bibitem{Germinet}
Germinet, F. Klein, A., Schenker, J.: Dynamical delocalization in random Landau Hamiltonians. Ann. Math. {\bf 166} 215--244 (2007)
\bibitem{Haldane}
Haldane, F.D.M.: Model for a Quantum Hall-Effect without Landau levels: Condensed-matter realization of the parity anomaly. Phys. Rev. Lett. {\bf 61} 2015--2018 (1988)
\bibitem{Halperin}
Halperin, B.I.: Quantized Hall conductance, current-carrying edge states, and the existence of extended states in a two-dimensional disordered potential. Phys. Rev. B {\bf 25} 2185 (1982)
\bibitem{HR-coarse}
Higson, N., Roe, J.: On the coarse Baum--Connes conjecture. In: Ferry, S., Ranicki, A., Rosenberg, J. (Eds.), Novikov Conjectures, Index Theorems, and Rigidity (London Math. Soc. Lect. Note Series, pp. 227--254), Cambridge Univ. Press, 1995
\bibitem{HRY}
Higson, N., Roe, J., Yu, G.: A coarse Mayer--Vietoris principle. Math. Proc. Camb. Phil. Soc. {\bf 114} 85--97 (1993)
\bibitem{KLT}
Kubota, Y., Ludewig, M., Thiang, G.C.: Delocalized spectra of Landau operators on helical surfaces. arXiv:2201.05416
\bibitem{Kuchment}
Kuchment, P.: An overview of periodic elliptic operators. Bull. Amer. Math. Soc. {\bf 53}(3) 343--414 (2016)
\bibitem{Kunz}
Kunz, H.: The Quantum Hall Effect for electrons in a random potential. Commun. Math. Phys. {\bf 112} 121--145 (1987)
\bibitem{LT-wannier}
Ludewig, M., Thiang, G.C.: Good Wannier bases in Hilbert modules associated to topological insulators. J. Math. Phys. {\bf 16} 061902 (2020)
\bibitem{LT-hyp}
Ludewig, M., Thiang, G.C.: Gaplessness of Landau Hamiltonians on hyperbolic half-planes via coarse geometry. Commun. Math. Phys. {\bf 386} 87--106 (2021)
\bibitem{MMP}
Marcelli, G., Moscolari, M., Panati, G.: Localization implies Chern triviality in non-periodic insulators. arXiv:2012.14407
\bibitem{amorphous}
Mitchell, N.P., Nash, L.M., Hexner, D., Turner, A.M., Irvine, W.T.M.: Amorphous topological insulators constructed from random point sets. Nature Phys. {\bf 18} 380--385 (2018)
\bibitem{NN}
Nenciu, A., Nenciu, G.: The existence of generalized Wannier functions for one-dimensional systems. Commun. Math. Phys. {\bf 190} 541--548 (1998)
\bibitem{PSB}
Prodan, E., Schulz-Baldes, H.: Bulk and Boundary Invariants for Complex Topological Insulators: From K-theory to Physics. Math. Phys. Stud., Springer, 2016
\bibitem{Rio}
Rio, R. del, Jitomirskaya, S., Last, Y., Simon, B.: Operators with singular continuous spectrum, IV. Hausdorff dimensions, rank one perturbations, and localization. J. d'Analyse Math. {\bf 69}(1) 153--200 (1996)
\bibitem{Roebook}
Roe, J.: Index Theory, Coarse Geometry, and Topology of Manifolds. CBMS Regional Conf. Series in Math., vol 90, 1996
\bibitem{Wang}
Wang, W.-M.: Microlocalization, percolation, and Anderson localization for the magnetic Schr\"{o}dinger operator with a random potential. J. Funct. Anal. {\bf 146} 1--26 (1977)
\end{thebibliography}
\end{document}